\renewcommand{~}{\,}
\newcommand{\NV}{\Delta}
\newcommand{\e}{\mathtt{e}}
\renewcommand{\d}{\mathtt{d}}
\newcommand{\dt}{\mathtt{d}t}
\newcommand{\Att}{\mathtt{A}}
\newcommand{\B}{\mathtt{B}}
\newcommand{\al}{\alpha}
\newcommand{\be}{\beta}
\newcommand{\K}{\mathcal{K}}
\newcommand{\N}{\mathcal{N}}
\newcommand{\Z}{\mathcal{Z}}
\renewcommand{\rho}{\varrho}
\newcommand{\braw}{\langle w|}
\newcommand{\ketv}{|v\rangle}
\newcommand{\config}{\mathcal{C}}
\newcommand{\confit}{\mathcal{T}}
\newcommand{\prob}{\mathrm{P}_{\mathrm{ss}}}
\renewcommand{\geq}{\geqslant}
\renewcommand{\leq}{\leqslant}
\newcommand{\BE}{\begin{equation}}
\newcommand{\EE}{\end{equation}}
\theoremstyle{definition}
\newtheorem{lemma}{\bf Lemma}
\begin{document}

\title{Inter-particle gap distribution and spectral rigidity of totally asymmetric simple exclusion process with open boundaries}

\author{\textbf{Milan Krb\'alek}${}^{1}$ and \textbf{Pavel Hrab\'ak}${}^{1}$\\
\footnotesize $^1$ Faculty of Nuclear Sciences and Physical Engineering, Czech Technical University in Prague, Prague -- Czech Republic}

\maketitle

\begin{abstract}
We consider the one-dimensional totally asymmetric simple exclusion model (TASEP model) with open boundary conditions and present the analytical computations leading to the exact formula for distance clearance distribution, i.e. probability density for a clear distance between subsequent particles of the model. The general relation is rapidly simplified for middle part of the one-dimensional lattice using the large $N$ approximation. Both the analytical formulas and their approximations are successfully compared with the numerical representation of the TASEP model. Furthermore, we introduce the pertinent estimation for so-called spectral rigidity  of the model. The results obtained are sequentially discussed within the scope of vehicular traffic theory.\\

PACS numbers: 02.50.-r, 05.45.-a, 89.40.-a\\

\end{abstract}


\section{Introduction and motivation}\label{sec:prvni}

The effect of queuing is one of commonly appearing phenomenons in nature. It can be disclosed in microbiological systems, societies of animals, computer networks, public transport systems, parking cars manoeuvres and many others. In the recent years, queuing has attracted attention of many physicists and mathematicians. It is well known that many of above-mentioned dynamics systems belong to the same class of mathematical tasks. A convenient way how to investigate such systems in detail can be found in space-discrete and time-continuous one-dimensional models based on asymmetric exclusion processes. Indeed, the family of asymmetric exclusion processes is as wide-ranging as family of its applications. Concretely, the ASEP-models have been successfully used for description of protein synthesis \cite{Schutz}, \cite{Shaw}, polymers in random media \cite{Krug}, fluctuations in shock fronts \cite{Janowsky}, \cite{Schutz2}, \cite{Derrida_Janowsky}, \cite{Mallick}, gel electronic \cite{Barkema}, in molecular biology \cite{Waterman} and finally in physics of traffic \cite{Schreckenberg}, \cite{Ha}, \cite{Headways_particle_models}, \cite{Fouladvand} or \cite{Antal}. \\

\begin{figure}[htb]
\label{fig:FD}
\begin{center}
\epsfig{file=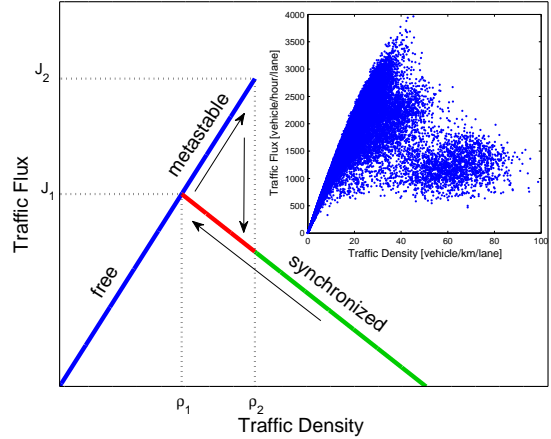,height=2.5in} \parbox{12cm}{\caption{The Empirical Flow-Density Relation and Relevant Schematic Representation.  The dependence of traffic Flux $J$ on traffic Density $\rho$ for realistic traffic flows
uses to be compared to a mirror image of the Greek letter $\lambda$ (see \cite{Review-Helbing} or \cite{Review-Chowdhury}). The relation $J=J(\rho)$ extracted from the induction-double-loop-detector data is visualized in the inset.}}
\end{center}
\end{figure}

In this article we are focused predominantly on those aspects of cellular modeling having a vehicular-traffic interpretation. With respect to the fact that macroscopical behavior of TASEP (represented for example by the fundamental relation between the flux and density of the model) corresponds to that detected in the freeway samples (compare Figures 1 and 4 for illustration) it is meaningful to consider the TASEP-model as an appropriate choice for one-lane traffic simulator. However, besides the confrontation of macroscopic quantities it is indispensable to make a comparison of microscopic quantities. Recently, the extensive investigations of traffic data (\cite{Red_cars}, \cite{Magd}, \cite{Cecile}, \cite{Treiber_PRE}, \cite{Krbalek_gas}, \cite{Wagner}, or \cite{EPJ_Helbing}) provide a good insight into the microstructure of traffic samples. It has been demonstrated in some of the above-mentioned articles that inter-vehicle gap statistics (clearance distribution) in real-road traffic can be very well estimated by the one-parametric family of functions
\BE \wp(r)=A\, \Theta(r)\,\e^{-\frac{\nu}{r}}\e^{-Br}, \label{p_beta} \EE
where
\BE B=\nu+\frac{3-\e^{-\sqrt{\nu}}}{2}, \label{becko} \EE
\BE A^{-1}=2\sqrt{\frac{\nu}{B}}~\mathcal{K}_1\bigl(2\sqrt{B\nu}\bigr).\label{acko} \EE
We remark that the functions $\Theta(x),$  $\K_\lambda(x)$ represent the Heaviside's step-function
$$\Theta(x)=\left\{\begin{array}{ccc} 1,&\hspace{0.2cm}&x>0\\0,&\hspace{0.2cm} & x\leq
0\end{array}\right.$$
and the modified Bessel's function of the second kind  (Mac-Donald's function), respectively. Furthermore, the distribution $\wp(r)$ fulfils two normalization conditions
\BE \int_\mathbb{R} \wp(r)\,\d{r}=1 \label{norma1} \EE
and
\BE \int_\mathbb{R} r\,\wp(r)\,\d{r}=1. \label{norma2} \EE
The latter represents a scaling to a mean clearance equal to one. We add that the one and only parameter $\nu$ is related to the traffic density $\rho$ (see \cite{Traffic_NV} for details). Roughly speaking, such a parameter (called speculatively as mental strain coefficient) reflects a rate of psychological pressure under which the car-drivers are during driving manoeuvres. The changes of traffic status from free flows to congested flows and vice versa are accompanied by the adequate changes of mental strain coefficient $\nu,$ i.e. by the adequate changes of clearance distribution. The chosen representatives of the relevant analysis are displayed in the Figure 2. Here one can detect the basic probabilistic trends of distances among succeeding cars. Whereas in the region of small densities (free traffic regime) the relevant probability density is exponential essentially (which fully corresponds to the fact that cars interactions are negligible) the distribution $\wp(r)$ is rapidly changing if congested data are observed. In this case the stronger mutual interactions among vehicles lead to the hardcore repulsions in the system, which results in fact that
$$\lim_{r \rightarrow 0_+} \wp(r)=0.$$
As discussed in \cite{Traffic_NV} the intermediate region of metastable traffic states shows a substantial growth of
parameter $\nu.$ This is influenced by the fact that driver, moving quite fast in relatively dense traffic flow, is under a
considerable psychological pressure. After the transition from
free to congested regime the pressure momentarily declines
because of decrease in mean velocity. Finally, if the traffic flow becomes denser and denser the mental strain coefficient $\nu$ is increasing further. This finally culminates by the creation of stop-and-go traffic waves.\\

\begin{figure}[htb]
\label{fig:TH_traffic}
\begin{center}
\epsfig{file=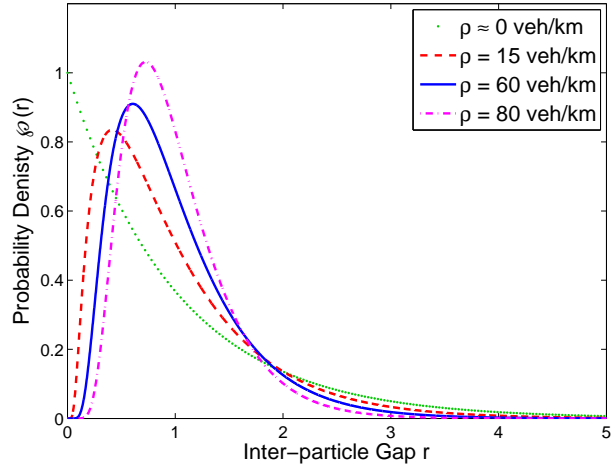,height=2.5in} \parbox{12cm}{\caption{Clearance Distribution for Real-Road Traffic. The curves visualize the analytical estimations \eqref{p_beta} of the inter-particle statistics obtained for the densities indicated in the legend.}}
\end{center}
\end{figure}

The maim goal of this article is to describe some correspondences between the microstructure of asymmetric simple exclusion model and real-road traffic. For this purpose we will compare (in the first part of this work) the relevant clearance distributions, i.e. probability densities for clear gap among all pairs of succeeding particles/vehicles. Subsequently we will analyze the associated spectral rigidities of both systems and will discuss their similarities.

\section{The totally asymmetric simple exclusion process (TASEP) with open boundaries}

Consider a chain of length $N$ containing $N$ equivalent cells and define three fixed parameters $\al,\be,p\in[0,1].$ Let each cell to be either occupied by one particle or empty. During the infinitesimal time interval $\dt$ each particle hops to the immediate site (in the defined direction) with probability $p\dt$ if the target site is empty. In the opposite case (if the target site is occupied) the particle does not change its location. If the first cell $(\ell=1)$ is empty a new particle is injected into the chain with probability $\al\dt.$ Similarly, if the last cell $(\ell=N)$ is occupied the relevant particle leaves the chain with probability $\be\dt.$ This definition can be generalized if needed, however, for all purposes the original formulation is fully sufficient. Furthermore, we will use (without loss of generality) the re-scaled variant of the model where $p=1.$\\

\begin{figure}[htb]
\begin{center}
\label{fig:def_tasep}
\epsfig{file=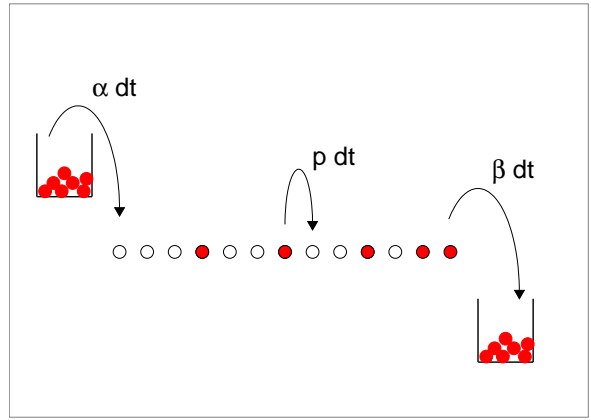,height=2.2in} \parbox{12cm}{\caption{Definition of the TASEP. The configuration visualized can be identified with the vector $\config=(0,3,1,2,1,2,1,1,2).$ }}
\end{center}
\end{figure}

The sketched rules define a simple driven lattice model of one-lane traffic whose elements are hard-core-repulsed by "implicit forces" derived from the above-mentioned exclusion rules. A great merit of such a model lies in the exact solvability of the associated steady state. Indeed, two alternative methods for exact solution of TASEP are outlined in \cite{Derrida_1}, \cite{Derrida_2}. For intentions of this research we consistently use the methods based on the \emph{matrix formulation} introduced in \cite{Derrida_2}. According to this matrix method we consider the matrices $D,E$ (infinity-dimensional, in general) and vectors $\braw$ and $\ketv$ satisfying the following algebraic rules
\BE DE=D+E,\label{DEC} \EE
\BE \braw E = \frac{1}{\al} \braw,\label{Ecko} \EE
\BE D\ketv = \frac{1}{\be}\ketv.\label{Decko} \EE
Although each configuration of the model can naturally be described by the binary sequence $$\confit=(\tau_1,\tau_2,\ldots,\tau_N) \in \{0,1\}^N,$$ it is profitable to use the following convention.
Let the symbol $\config=(n_1,m_1,n_2,m_2,\ldots,n_q,m_q)$ denote the configuration of the TASEP-chain such that (starting from the initial cell of the chain) $n_1$ is the maximal number of empty cells (i.e. the $(n_1+1)$th cell is occupied), $m_1$ is the maximal number of occupied cells (starting from the $(n_1+1)$th cell), when the $(n_1+m_1+1)$th cell is empty, and so on. Thus, $\sum_i n_i+\sum_j m_j=N.$ The previous definition (visualized for lucidity in the Figure 3) corresponds unambiguously to the relevant arrangement of the particles inside the TASEP-chain, which means that there exists some bijection between $\confit$ and $\config.$ Moreover, it has been proved in \cite{Derrida_3} that the steady-state probability of an arbitrary configuration $\config$ reads as
$$\prob(\config)= \frac{1}{\Z_N} \frac{\braw D^{n_1}E^{m_1}D^{n_2}E^{m_2}\ldots \ketv}{\langle w \ketv},$$
where the constant $\Z_N$ assures the proper normalization, i.e.
$$\sum_{\config_k} \prob(\config_k) =1$$
for $k$ running over all possible configurations. In fact, such a probability can be calculated directly using the relations eq\ref{DEC}, \eqref{Ecko}, and \eqref{Decko}, i.e.  without knowing the specific representation of matrices $D,E$ and vectors $\braw$ and $\ketv.$ Moreover, let the symbol
$$\omega_\confit(i)=\left\{\begin{array}{ccl} 1,&\hspace{0.2cm}& i-\mathrm{th~~site~~is~~occupied}\\0,&\hspace{0.2cm} & i-\mathrm{th~~site~~is~~empty},\end{array}\right.$$
represent (for the fixed configuration $\confit$) the binary functional $\omega_\confit: \widehat{N} \mapsto \{0,1\}.$ Let the symbol $$\mathscr{X}=(n_1,m_1,\ell_1,n_2,m_2,\ell_2,\ldots,n_q,m_q,\ell_q)$$ denote the set of all configurations such that $n_1$ is the maximal number of empty cells (i.e. the $(n_1+1)$th cell is occupied), $m_1$ is the maximal number of occupied cells (starting from the $(n_1+1)$th cell), when the $(n_1+m_1+1)$th cell is empty, and $\ell_1$ is the number of cells (behind the $(n_1+m_1)$th cell) which can be arbitrarily occupied or not. Precisely speaking, the symbol  $\mathscr{X}$ corresponds to the set
\begin{eqnarray*}
 && \Bigl\{(\omega_\confit(1),\omega_\confit(12),\ldots,\omega_\confit(N)) \in \{0,1\}^N:~\omega_\confit(1)=\omega_\confit(2)=\ldots=\omega_\confit(n_1)=0~\wedge~\\
 && \omega_\confit(n_1+1)=\omega_\confit(n_1+2)=\ldots=\omega_\confit(n_1+m_1)=1~\wedge~\\
 && \omega_\confit(n_1+m_1+\ell_1+1)=\omega_\confit(n_1+m_1+\ell_1+2)=\ldots=\omega_\confit(n_1+m_1+\ell_1+n_2)=0 ~\wedge~\\
 && \omega_\confit(n_1+m_1+\ell_1+n_2+1) = \omega_\confit(n_1+m_1+\ell_1+n_2+2)=\ldots=\omega_\confit(n_1+m_1+\ell_1+n_2+m_2)=1 ~\wedge~\\
 && \omega_\confit(n_1+m_1+\ell_1+n_2+m_2+\ell_2+1)=\ldots=\omega_\confit(n_1+m_1+\ell_1+n_2+m_2+\ell_2+n_3)=0 ~\wedge~\ldots\Bigr\}.
\end{eqnarray*}
As follows from the results published for steady state of TASEP-model (see for example \cite{Derrida_3} or \cite{Blythe}) the probability for finding the system in one of configurations included in $\mathscr{X}$ is
\BE
\prob(\mathscr{X})= \frac{1}{\Z_N} \frac{\braw D^{n_1}E^{m_1}(D+E)^{\ell_1} D^{n_2}E^{m_2}(D+E)^{\ell_2}\ldots \ketv}{\langle w \ketv}.\label{velka_formule}
\EE
As a direct consequence of this assertion we easily deduce that if $\mathscr{X}=(0,0,N)$ then
$$\prob(\mathscr{X})= \frac{1}{\Z_N} \frac{\braw D^0E^0(D+E)^N \ketv}{\langle w \ketv}= \frac{1}{\Z_N} \frac{\braw (D+E)^N \ketv}{\langle w \ketv}.$$
As $\mathscr{X}=(0,0,N)$ includes all permissible configurations of the system (and therefore $\prob(\mathscr{X})=1$) one can trivially calculate the value of the respective partition sum as
$$\Z_N= \frac{\braw (D+E)^N \ketv}{\langle w \ketv}.$$
Using the lemma \ref{lemma:den} and notation~\eqref{eq:notation} we can assert that
$$
\Z_N= \frac{1}{\langle w \ketv} \sum_{m=1}^N \frac{m(2N-m-1)!}{N!(N-m)!}\sum_{i=0}^m \braw E^i~D^{m-i} \ketv =\sum_{m=1}^N \B_{N,m}\sum_{i=0}^m \frac{1}{\al^i}\frac{1}{\be^{m-i}}.
$$
Hence
\BE \Z_N=\left\{\begin{array}{lll} \sum_{m=1}^N \B_{N,m} \frac{\be^{-m-1}-\al^{-m-1}}{\be^{-1}-\al^{-1}} & \quad & \al\neq \be \\
 \sum_{m=1}^N \B_{N,m} m\al^{-m} & \quad & \al=\be. \end{array} \right. \label{normalka} \EE

We conclude that by means of the formulas \eqref{velka_formule} and \eqref{normalka} one can enumerate the probability of an arbitrary steady-state-configuration of TASEP.

\section{One-dimensional representation of associated matrix algebra}

As demonstrated above, the steady-state probability distribution of an arbitrary arrangement of TASEP particles is derived from the matrix algebra \eqref{DEC} -- \eqref{Decko}. Such a quadratic algebra is formulated for two matrices $D$ and $E$ which are associated to the particles and holes respectively. As proven in \cite{Derrida_2} the matrices fulfilling the rules \eqref{DEC} -- \eqref{Decko} are non-commuting and infinity-dimensional, in general. However, for special choice of parameter $\al$ and $\be$ the matrices $E$ and $D$ can be commuting, i.e. $DE=ED.$ Under this condition the equation \eqref{DEC} leads to the equations
\BE \label{eq:D+E=DE} D+E=DE=ED, \EE
$$\frac{1}{\be}\ketv + E\ketv = E \frac{1}{\be}\ketv$$
$$\frac{1}{\be}\langle w \ketv + \frac{1}{\al}\langle w\ketv = \braw \frac{1}{\al\be}\ketv$$
\BE \al+\be=1. \label{special_condition} \EE
Thus, if the condition \eqref{special_condition} is guaranteed the commuting matrices $D$ and $E$ are one-dimensional, i.e. they are represented in fact by the numbers $E=\al^{-1}$ and $D=\be^{-1}.$ Consecutively, we can choose $\braw=\ketv=1.$ Above that, as the partition sum is now reduced to
$$\Z_N= \frac{\braw (D+E)^N \ketv}{\langle w \ketv}= \left(\frac{1}{\al\be}\right)^N$$
the steady-state probability (for the set of configurations $\mathscr{X}$) reads as
$$\prob(\mathscr{X})= (\al\be)^N  \left(\frac{1}{\be}\right)^{\Sigma n_i}\left(\frac{1}{\al}\right)^{\Sigma m_i}\left(\frac{1}{\al\be}\right)^{\Sigma \ell_i}=\al^{N-\Sigma m_i-\Sigma \ell_i}\be^{N-\Sigma n_i-\Sigma \ell_i}=\al^{\Sigma n_i}(1-\al)^{\Sigma m_i}.$$

\section{Macroscopic characteristics in mean-field approximation}

For intentions of this article we briefly summarize the known results on macroscopical behavior of TASEP-model. We are concentrated predominantly on the bulk density and flux in mean-field approximation. Concretely, the average density of particles occurring inside the $i$th cell can be depicted as a probability for finding the system in the certain configuration chosen from $\mathscr{X}_{\rho_i}=(0,0,i-1,1,0,N-i).$ That means
$$\rho_i^{(N)}=\prob(\mathscr{X}_{\rho_i})=\frac{\braw (D+E)^{i-1}D(D+E)^{N-i}\ketv}{\braw (D+E)^N\ketv}\,.$$
Similarly, the flux through the $i$th site is defined as
$$J_i^{(N)}=\prob(\mathscr{X}_{J_i})=\frac{\braw (D+E)^{i-1}DE(D+E)^{N-i-1}\ketv}{\braw (D+E)^N\ketv}=\frac{\Z_{N-1}}{\Z_N}\,,$$
where $\mathscr{X}_{J_i}=(0,0,i-1,1,1,N-i-1)$. Using the large $N$ approximations 
\BE
\Z_N \approx \left\{ \begin{array}{lll}
\frac{\be\al(1-2\al)}{\be-\al}\left(\frac{1}{\al(1-\al)}\right)^{N+1} & \hspace{0.2cm} & \al < \frac{1}{2}~\wedge~\al<\be\\
\frac{\be\al(1-2\be)}{\al-\be}\left(\frac{1}{\be(1-\be)}\right)^{N+1} & \hspace{0.2cm} & \be < \frac{1}{2}~\wedge~\be<\al\\
\frac{\be\al}{\sqrt{\pi}(\al-\be)}\left[\frac{1}{(2\al-1)^2}-\frac{1}{(2\be-1)^2}\right]\frac{4^N}{N^{3/2}} & \hspace{0.2cm} & \al>\frac{1}{2}~\wedge~\be>\frac{1}{2}\\
\frac{\al^2}{\sqrt{\pi}(2\al-1)^3}\frac{4^{N+1}}{N^{3/2}} & \hspace{0.2cm} & \al=\be>\frac{1}{2}\\
\frac{(1-2\al)^2}{(1-\al)^2}\frac{N}{\al^N(1-\al)^N} & \hspace{0.2cm} & \al=\be<\frac{1}{2}\\
\frac{2\be}{\sqrt{\pi}(2\be-1)}\frac{4^{N}}{N^{1/2}} & \hspace{0.2cm} & \al=\frac{1}{2}<\be\\
4^N & \hspace{0.2cm} & \al=\be=\frac{1}{2}\\
\end{array} \right.\label{Denzitka}
\EE
derived in \cite{Derrida_2} we ascertain that (for an arbitrary cell being far from the boundaries of the system) the relevant bulk-density is given by
\BE
\label{eq:rholim}
\rho(\al,\be) = \left\{ \begin{array}{lll}
\frac{1}{2} & \hspace{0.2cm} & \al\geq \frac{1}{2}~\wedge~\be\geq \frac{1}{2},\\
\al & \hspace{0.2cm} & \al < \frac{1}{2}~\wedge~\be>\al,\\
1-\be & \hspace{0.2cm} & \be< \frac{1}{2}~\wedge~\be<\al,\\
\al & \hspace{0.2cm} & \al+\be=1.
\end{array} \right.
\EE
Regarding the TASEP flux $J$ it is trivial to show that it is independent on $i$ and
\BE
J(\al,\be) = \left\{ \begin{array}{lll}
\frac{1}{4} & \hspace{0.2cm} & \al\geq \frac{1}{2}~\wedge~\be\geq \frac{1}{2},\\
\al(1-\al) & \hspace{0.2cm} & \al < \frac{1}{2}~\wedge~\be>\al,\\
\be(1-\be) & \hspace{0.2cm} & \be< \frac{1}{2}~\wedge~\be<\al,\\
\al(1-\al) & \hspace{0.2cm} &  \al+\be=1.
\end{array} \right.\label{Fluxka}
\EE
After comparing the results \eqref{Denzitka} and \eqref{Fluxka} we close this section by the assertion that the fundamental diagram of TASEP-model (see Figure 4) is described  by the equation
\BE J(\rho)=\rho(1-\rho).\label{Fundam_TASEP} \EE

\begin{figure}[htb]
\begin{center}
\epsfig{file=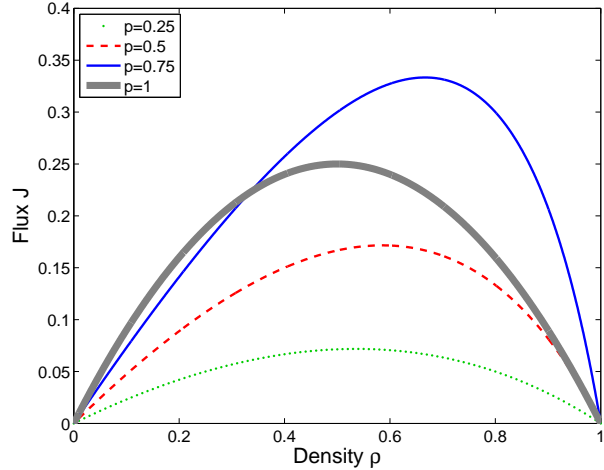,height=2.5in} \parbox{12cm}{\caption{The Flow-Density Relation for TASEP-model. The curves display the function $J=J(\rho)$ calculated for ordered-sequential update of ASEP-model with periodic boundary conditions (for details on TASEP update-procedures, please, see \cite{Rajewsky}). Parameter $p$ represents the rate for particle transition from the actual cell to the following one. The bold curve displays the theoretical relation \eqref{Fundam_TASEP} valid for time-continuous TASEP. \label{fig:tasepfundam}}}
\end{center}
\end{figure}

\section{Distance clearance distribution - a special case}

In the main part of this article we are aiming to derive an exact analytical formula for the so-called distance clearance of TASEP-model, i.e. probability density for clear distance (measured in cells) among the subsequent particles of the model. Such a probability density can be (in general) depending on the position $i$ of a monitored particle. Hence, the relevant distribution $\wp_i(k)$ is depending on $i$ as well. The unnormalized probability of a $k$cell-gap behind the $i$th occupied cell is denoted as $\widetilde{\wp_i}(k)=\prob(0,0,i-1,1,k-1,0,1,0,N-i-k)$ and can be calculated via
\BE
    \widetilde{\wp_i}^{(N)}(k)=\frac{1}{\Z_N} \frac{\braw (D+E)^{i-1}DE^{k-1}D(D+E)^{N-i-k} \ketv}{\langle w \ketv}\,\qquad (1\leq k\leq N-i),\label{hlavni_vztah}
\EE
because the first $i-1$ cells could be randomly occupied or vacant, the $i$th cell should be occupied, $k-1$ succeeding cells should be vacant as well, the $(i+k)$th cell should be occupied, and finally the tail of the chain could be occupied/vacant at random. The probability density for clearance is then $\wp_i^{(N)}(k)=\N\widetilde{\wp_i}^{(N)}(k)$, where the constant $\N$ assures the proper normalization computable via the condition
\BE
\sum_{k=1}^{N-i} \wp_i^{(N)}(k)=1.\label{headway_normalization}
\EE
If the associated matric algebra is one-dimensional, i.e. under the condition \eqref{special_condition}, then \eqref{hlavni_vztah} changes to
\BE
\wp_i^{(N)}(k)=\N \left(\frac{1}{\al}\right)^{N-2}\left(\frac{1}{\be}\right)^{N+1-k} \quad (1\leq k\leq N-i).\label{headway_special}
\EE
Precisely speaking, the derivation of the last formula using the relation \eqref{hlavni_vztah} is legitimate for $k \geq 3,$ since for $k=1,2$ the relation \eqref{hlavni_vztah} has a slightly modified form. However, it can easily be verified that relation \eqref{headway_special} holds true for all positive $k.$ The normalization pre-factor $\N$ can be quantified either from condition \eqref{headway_normalization} or from the equation
$$\N^{-1} = \frac{\braw (D+E)^{i-1}D(D+E)^{N-i} \ketv - \braw (D+E)^{i-1}DE^{N-i} \ketv}{\langle w \ketv}$$
that takes into account all possible configurations with the occupied $i$th cell and rejects those configurations having the empty tail (behind the $i$th cell), when all the cells $i+1,~i+2,\ldots,N$ are vacant. This leads to
$$\N^{-1} = \left(\frac{1}{\al}\right)^{N-1}\left(\frac{1}{\be}\right)^{N} \left(1-\be^{N-i}\right),$$
which yields (if \eqref{headway_special} is applied) the result desired
$$\wp_i^{(N)}(k)=\frac{\al}{1-\be^{N-i}}~\be^{k-1}.$$
The large $N$ approximation adjusts such a expression to
$$\wp_i(k)=\lim_{N\rightarrow+\infty}\wp_i^{(N)}(k)=\al~\be^{k-1}.$$
Note that this result is independent of the cell behind which the clearance is measured. As a consequence, we obtain the formula
$$ \langle k \rangle =\sum_{k=1}^\infty k \wp_i(k) = \al \sum_{k=1}^\infty k\be^{k-1}= \frac{\al}{(1-\be)^2}=\frac{1}{\al}$$
for mean clearance of TASEP-model. If reformulated the clearance distribution reads as
$$\wp(k)=\rho(1-\rho)^{k-1},$$
where the relation $\rho=\al$ is adopted. We remind that these outcomes are valid under the condition $\al+\be=1$ only.

\section{Distance clearance distribution - a general case}

To derive the exact formula of $\wp_i^{(N)}(k)$ for an arbitrary choice of parameters $\al$, $\be$ we will first rearrange the matrix product in~\eqref{hlavni_vztah} into the form
$$ (D+E)^{i-1}DE^{k-1}D(D+E)^{N-i-k}=\sum_{m,n}a_{m,n}E^mD^n.$$
Then it will be hold
$$\braw(D+E)^{i-1}DE^{k-1}D(D+E)^{N-i-k}\ketv=\sum_{m,n}a_{m,n}\al^{-m}\be^{-n}.$$
Using the notation~\eqref{eq:notation}, lemma~\ref{lemma:d+en}, and lemma~\ref{lemma:den} we get (for $k\neq N-i$)
\begin{align*}
    D(D+E)^{N-i-k}=\sum_{m=1}^{N-i-k}\B_{N-i-k,m}\sum_{\ell=0}^{m}DE^\ell D^{m-\ell}=\sum_{m=1}^{N-i-k}\B_{N-i-k,m}\sum_{\ell=0}^{m}\left(D^{m-\ell+1}+\sum_{z=1}^\ell E^zD^{m-\ell}\right).
\end{align*}
Furthermore, by means of lemma~\ref{lemma:den} we obtain
\begin{multline*}
    DE^{k-1}D(D+E)^{N-i-k}=\sum_{m=1}^{N-i-k}\B_{N-i-k,m}\sum_{l=0}^{m}\left(DE^{k-1}D^{m-\ell+1}+\sum_{z=1}^{\ell}DE^{z+k-1}D^{m-\ell}\right)=\\
    =\sum_{m=1}^{N-i-k}\B_{N-i-k,m}\sum_{\ell=0}^{m}\left(D^{m-\ell+2}+\sum_{w=1}^{k-1}E^wD^{m-\ell+1}+\ell D^{m-\ell+1}+\sum_{z=1}^{\ell}\sum_{w=1}^{z+k-1}E^wD^{m-\ell}\right).
\end{multline*}
For $i\geq2$ we apply the lemma~\ref{lemma:d+en} in the product $(D+E)^{i-1}.$ That provides
\begin{multline*}
    (D+E)^{i-1}DE^{k-1}D(D+E)^{N-i-k}=\sum_{p=1}^{i-1}\sum_{q=0}^{p}\sum_{m=1}^{N-i-k}\sum_{\ell=0}^{m}\B_{N-i-k,m}\B_{i-1,p}
    \Bigl(E^qD^{p-q+m-\ell+2}+\\ +\sum_{w=1}^{k-1}E^q\underbrace{D^{p-q}E^w}_{}D^{m-\ell+1}+\ell E^qD^{p-q+m-\ell+1}+\sum_{z=1}^{\ell}\sum_{w=1}^{z+k-1}E^q\underbrace{D^{p-q}E^w}_{}D^{m-\ell}\Bigr).
\end{multline*}
From the lemma~\ref{lemma:dnem} we derive that (for $q<p$)
$$ D^{p-q}E^w=\sum_{a=1}^{w}\frac{(p-q-a+w-1)!}{(p-q-1)!(w-a)!}E^a+\sum_{b=1}^{p-q}\frac{(p-q-b+w-1)!}{(p-q-b)!(w-1)!}D^b
    =\sum_{a=1}^{w}\Att_1E^a+\sum_{b=1}^{p-q}\Att_2D^b,$$
where we use, for convenience, the notation
$$ \Att_1=\frac{(p-q-a+w-1)!}{(p-q-1)!(w-a)!}\,,\qquad \Att_2=\frac{(p-q-b+w-1)!}{(p-q-b)!(w-1)!}.$$
This leads to the final formula
\begin{multline*}
    (D+E)^{i-1}DE^{k-1}D(D+E)^{N-i-k}=\\
    =\sum_{p=1}^{i-1}\sum_{m=1}^{N-i-k}\sum_{\ell=0}^{m}\sum_{q=0}^{p-1}\B_{N-i-k,m}\B_{i-1,p}\left(E^qD^{p-q+m-\ell+2}+\ell E^qD^{p-q+m-\ell+1}+\sum_{w=1}^{k-1}\sum_{a=1}^{w}\Att_1E^{q+a}D^{m-\ell+1}+\right.\\
    \left.+\sum_{w=1}^{k-1}\sum_{a=1}^{p-q}\Att_2E^qD^{m-\ell+1+b}+\sum_{z=1}^{\ell}\sum_{w=1}^{z+k-1}\sum_{a=1}^{w}\Att_1E^{q+a}D^{m-\ell}+\sum_{z=1}^{\ell}\sum_{w=1}^{z+k-1}\sum_{b=1}^{p-q} \Att_{2} E^{q} D^{m-\ell+b}\right)+\\
    +\sum_{p=1}^{i-1}\sum_{m=1}^{N-i-k}\sum_{\ell=0}^{m}\B_{N-i-k,m}\B_{i-1,p}\left(E^pD^{m-\ell+2}+\sum_{w=1}^{k-1}E^{w+p}D^{m-\ell+1}+\ell E^pD^{m-\ell+1}+\sum_{z=1}^{\ell}\sum_{w=1}^{z+k-1}E^{w+p}D^{m-\ell}\right).
\end{multline*}
Multiplying such an expression by vectors $\braw$ and $\ketv$ gives the following dependence on parameters $\al,$ $\be$
\begin{multline}
 \wp_i^{(N)}(k)
 =\frac{\mathcal{A}}{Z_N\langle w\ketv}\sum_{p=1}^{i-1}\sum_{m=1}^{N-i-k}\sum_{\ell=0}^{m}\B_{N-i-k,m}\B_{i-1,p}\left(\frac{1}{\beta}\right)^{m-\ell}\left\{\left(\frac{1}{\beta^2}+ \frac{\ell}{\beta}\right)\sum_{q=0}^{p-1}\left(\frac{1}{\alpha}\right)^q\left(\frac{1}{\beta}\right)^{p-q} +\right.\\
 +\sum_{q=0}^{p-1}\left(\frac{1}{\alpha}\right)^q\sum_{w=1}^{k-1}\frac{1}{\beta}\left(\sum_{a=1}^{w}\Att_1\left(\frac{1}{\alpha}\right)^{a} +\sum_{b=1}^{p-q}\Att_2\left(\frac{1}{\beta}\right)^{b}\right)+\\
 +\sum_{z=1}^{\ell}\sum_{q=0}^{p-1}\left(\frac{1}{\alpha}\right)^{q}\sum_{w=1}^{z+k-1}\left(\sum_{a=1}^{w}\Att_1\left(\frac{1}{\alpha}\right)^{a} +\sum_{b=1}^{p-q} \Att_{2}  \left(\frac{1}{\beta}\right)^{b}\right)+\\
 +\left(\frac{1}{\alpha}\right)^p\left.\left(\frac{1}{\beta^2}+\frac{\ell}{\beta} +\sum_{w=1}^{k-1}\left(\frac{1}{\alpha}\right)^{w}\frac{1}{\beta}+ \sum_{z=1}^{\ell}\sum_{w=1}^{z+k-1}\left(\frac{1}{\alpha}\right)^{w} \right) \right\}\,, \label{finalka-clearance}
\end{multline}
where we consider $i\geq2$ and $k<N-i$. The formulas for the remaining cases have been obtained analogically.
$$\wp_1^{(N)}(k)=\frac{\mathcal{A}}{Z_N\langle w\ketv}\sum_{m=1}^{N-i-k}\sum_{\ell=0}^{m}\B_{N-i-k,m}\left(\frac{1}{\beta}\right)^{m-\ell}\left\{\frac{1}{\beta^2}+\frac{\ell}{\beta} +\sum_{w=1}^{k-1}\left(\frac{1}{\alpha}\right)^{w}\frac{1}{\beta}+ \sum_{z=1}^{\ell}\sum_{w=1}^{z+k-1}\left(\frac{1}{\alpha}\right)^{w} \right\},$$
\begin{multline*}
    \wp_i^{(N)}(N-i)=\frac{\mathcal{A}}{Z_N\langle w\ketv}\sum_{p=1}^{i-1}\B_{i-1,p}\left\{\sum_{q=0}^{p}\left(\frac{1}{\al}\right)^q\left(\frac{1}{\beta}\right)^{p-q+1}+\sum_{w=1}^{N-i-1}\left(\frac{1}{\al}\right)^{p+w}+\right.\\
    \left.+\sum_{q=0}^{p-1}\left(\frac{1}{\al}\right)^q\sum_{w=1}^{N-i-1}\left(\sum_{a=1}^{w}\Att_1\left(\frac{1}{\alpha}\right)^{a} +\sum_{b=1}^{p-q}\Att_2\left(\frac{1}{\beta}\right)^{b}\right)\right\},
\end{multline*}
$$ \wp_1^{(N)}(N-1)=\left(\frac{1}{\be}\right)^2+\sum_{w=1}^{N-2}\left(\frac{1}{\al}\right)^w\left(\frac{1}{\be}\right).$$

Now we are focused on the asymptotic clearance distribution $\wp(k)$. As we are interested in the probability of a gap of the length $k$ between two succeeding particles in the bulk, i.e. far from the boundaries, it is beneficial to fix the position of cells $i,i+1,\dots,i+k$. Based on this strategy we denote the number of arbitrary occupied cells before the gap as $m=i-1$ and the number of arbitrary occupied cells behind the gap by $n=N-i-k$. The limit $N\rightarrow\infty$ is then meant in the sense $m,n\rightarrow\infty$. This approach assures that the gap is located far from the boundaries with respect to the lattice size $N,$ which is necessary for the calculations below.\\

Let us now investigate the behavior of the quantity $\widetilde{\wp}_{m,n}(k)=\widetilde{\wp}_{m+1}^{(N)}(k)$, where $N=m+n+k+1$. In the term of matrix-product-ansatz description (MPA) we focuss on
\BE
\label{eq:wpmn} \widetilde{\wp}_{m,n}(k)=\frac{1}{\Z_{m+n+k+1}} \frac{\braw (D+E)^{m}DE^{k-1}D(D+E)^{n} \ketv}{\langle w \ketv}
\EE
in the limit $m,n\rightarrow\infty$. Firstly we derive the asymptotic form $\widetilde{\wp}(k)=\lim_{m,n\rightarrow\infty}\widetilde{\wp}_{m,n}(k)$ for $k=1,2$ and then we get a recursive formula for $\widetilde{\wp}(k+1)$, $k\geq3$. Rearranging the matrix product in~\eqref{eq:wpmn} using~\eqref{eq:D+E=DE} for $k=1$ we get
\begin{multline*}
    (D+E)^{m}DD(D+E)^{n}=(D+E)^{m}D(D+E)^{n+1}-(D+E)^{m}DE(D+E)^{n}=\\
    =(D+E)^{m}D(D+E)^{n+1}-(D+E)^{m+n+1}.
\end{multline*}
And hence, using the asymptotic form of fundamental dependence $J(\rho)=\rho(1-\rho)$, it holds
\BE
\label{eq:pk1} \widetilde{\wp}_{m,n}(1)=\rho_{m+1}^{(n+m+2)}-J^{(n+m+2)}\quad\Rightarrow\quad\widetilde{\wp}(1)=\rho-\rho(1-\rho)=\rho^2.
\EE
For $k=2$ we analogically obtain
$$ (D+E)^{m}DED(D+E)^{n}=(D+E)^{m+1}D(D+E)^{n},$$
and therefore
\BE
\label{eq:pk2} \widetilde{\wp}_{m,n}(2)=\rho_{m+1}^{(m+n+2)}J^{(m+n+3)}\quad\Rightarrow\quad\widetilde{\wp}(2)=\rho\rho(1-\rho)=\rho^2(1-\rho).
\EE
Moreover, for $k\geq3$ it holds
$$ (D+E)^{m}DE^{k-1}D(D+E)^{n}=(D+E)^{m}DE^{k-2}D(D+E)^{n}+(D+E)^{m-1}DE^{k}D(D+E)^{n}$$
That leads to the formula
\BE
\label{eq:recursion} \widetilde{\wp}_{m-1,n}(k+1)=\widetilde{\wp}_{m,n}(k)-J^{m+n+k+1}\widetilde{\wp}_{m,n}(k-1).
\EE
For an arbitrary fixed $k\in\mathbb{N}$ it holds that
$$ \lim_{m,n\rightarrow\infty}\widetilde{\wp}_{m,n}(k)=\lim_{m,n\rightarrow\infty}\widetilde{\wp}_{m-1,n}(k)=:\widetilde{\wp}(k).$$
The existence of the limit $\widetilde{\wp}(k)$ for all $k\in\mathbb{N}$ can be elementarily confirmed using mathematical induction by means of relations~\eqref{eq:pk1},~\eqref{eq:pk2},~\eqref{eq:recursion}. Now we can apply the limit $m,n\rightarrow\infty$ on the relation~\eqref{eq:recursion} obtaining
\begin{align*}
    &\widetilde{\wp}(k+1)=\widetilde{\wp}(k)-\rho(1-\rho)\widetilde{\wp}(k-1)\,,&&\widetilde{\wp}(1)=\rho^2\,,&&\widetilde{\wp}(2)=\rho^2(1-\rho).&
\end{align*}
It is easy to verify that the solution of this recursion is the relation
$$\widetilde{\wp}(k)=\rho^2(1-\rho)^{k-1}.$$
The normalization constant $\N$ is then
$$\N=\left(\sum_{k=1}^{\infty}\rho^2(1-\rho)^{k-1}\right)^{-1}=\frac{1}{\rho}$$
and hence the clearance distribution $\wp(k)=\N\widetilde{\wp}(k)$ fulfils the relationship
\BE
    \wp(k)=\varrho(1-\varrho)^{k-1}. \label{universal_headway_distribution}
\EE
All the time we kept in mind that the bulk density $\rho$ is a function of parameters $\al$, $\be$ given by~\eqref{eq:rholim}. Thus, the dependence of the clearance distribution on these parameters reads
$$
    \wp(k;\al,\be)=
    \begin{cases}
        \frac{1}{2^k}&  \al\geq \frac{1}{2}~\wedge~\be\geq \frac{1}{2},\\
        \al(1-\al)^{k-1}&   \al < \frac{1}{2}~\wedge~\be>\al,\\
        (1-\be)\be^{k-1}&   \be< \frac{1}{2}~\wedge~\be<\al,\\
        \al\be^{k-1}&   \al+\be=1.
    \end{cases}
$$

\section{Mean-field spectral rigidity of TASEP-model}

Description of traffic microstructure by means of inter-vehicle gap distribution $\wp(r)$ is, as discussed above, usual in traffic theory. However, clearance distribution depicts the distance gaps between two successive vehicles only. Aiming to investigate the middle-ranged interactions among the cars it is necessary to find a mathematical quantity suitable for quantifying the level of synchronization for larger clusters of particles. This desired quantity can be found in Random Matrix Theory (see \cite{Mehta}) where it provides an insight into the structure of eigenvalues of random matrix ensembles. It is called \emph{a spectral rigidity.} If reformulated within the bounds of traffic theory the spectral rigidity has the following interpretation.\\

\begin{figure}[htb]
\begin{center}
\epsfig{file=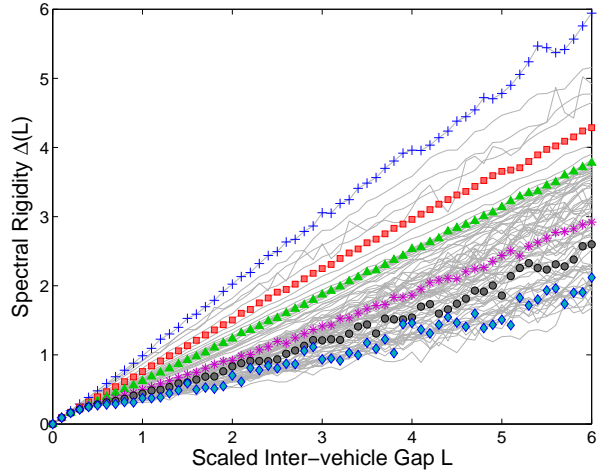,height=2.5in} \parbox{12cm}{\caption{Spectral Rigidity of Vehicular Traffic. Gray curves correspond to the rigidity $\Delta(L)$ analyzed separately in 85 density regions. The chosen results of relevant statistical analysis are picked out. Concretely, plus signs, squares, triangles, stars, circles, and diamonds represent the rigidities obtained for traffic data from the following density regions: [2,3); [6,7); [16,17); [26,27); [40,41); and [85,86) vehicles/km/lane respectively. For details, please, see the article \cite{Traffic_NV}.}}
\end{center}
\label{fig:vehicular_rigidity}
\end{figure}

Consider a set $\{r_i:i=1 \ldots Q\}$ of netto gaps between each pair of the succeeding cars moving
on an one-lane-freeway. We suppose that the mean gap taken over the
complete set is re-scaled to one, i.e.
\BE
\sum_{i=1}^Q r_i=Q.\label{preskalovani}
\EE
Dividing the interval $[0,Q)$ into subintervals
$\bigl[(k-1)L,kL\bigr)$ of a length $L$ and denoting by $n_k(L)$
the number of cars in the $k-$th subinterval, the average value
$\overline{n}(L)$ taken over all possible subintervals is
$$\overline{n}(L)=\frac{1}{\lfloor Q/L \rfloor} \sum_{k=1}^{\lfloor Q/L
\rfloor} n_k(L)=L,$$
where the integer part $\lfloor Q/L \rfloor$ stands for the number
of all subintervals $\bigl[(k-1)L,kL\bigr)$ included in the entire interval
$[0,Q).$ We suppose, for convenience, that $Q/L$ is integer, i.e.
$\lfloor Q/L \rfloor=Q/L.$ The spectral rigidity $\Delta(L)$ is then
defined as
$$\NV(L)=\frac{L}{Q} \sum_{k=1}^{Q/L} \bigl(n_k(L)-L\bigr)^2$$
and represents the statistical variance of the number of vehicles
moving at the same time inside a fixed part of the road of a length
$L.$
For the one-parametric family of distributions \eqref{p_beta} there has been proved in the article \cite{Traffic_NV} that the associated spectral rigidity has a form
\BE
\NV(L)=\chi L +\gamma + \mathcal{O}(L^{-1}),\label{forestGUMP}
\EE
where (together with the relation \eqref{becko})
\BE
\chi=\chi(\nu)=\frac{2+\sqrt{B\nu}}{2B(1+\sqrt{B\nu})}\label{forestGUMP_chi}
\EE
and
\BE
\gamma=\gamma(\nu)=\frac{6\sqrt{B\nu}+B\nu\bigl(21+4B\nu+16\sqrt{B\nu}\bigr)}{24\bigl(1+\sqrt{B\nu}\bigr)^4}.\label{forestGUMP_gamma}
\EE
It means that the rigidity is a linear function whose slope $\chi$ is depending on the mental strain coefficient $\nu$ (briefly described in the section \ref{sec:prvni}). Approximately, the changes of $\chi$ (with respect to the traffic density) can be described as follows. For small densities the slope $\chi$ in relation $\NV(L) \approx \chi L +\gamma$ is close to one as expected for statistically independent events. Nevertheless, if the traffic density increases the interactions among vehicles strengthen, which results in a descent of the slope $\chi.$ The more detailed insight into the realistic behavior of $\NV(L)$ is demonstrated in the Figure 5, where the experimental results of "traffic spectral analysis" are plotted.\\

At the moment our immediate goal is to derive the analytical formula for the spectral rigidity $\NV$ of totally asymmetric simple exclusion model. Firstly, note that the mean clearance calculated for the distribution \eqref{universal_headway_distribution} is
$$\langle k \rangle =\sum_{k=1}^\infty k \rho(1-\rho)^{k-1}= \frac{1}{\rho},$$
which means that in the end of our computations there will be necessity to revise the results in accord with the general definition \eqref{preskalovani}. Now, denote by $n_\ell(s)$ the probability that there is exactly $s$ particles inside the fixed $\ell-$cells region of TASEP-chain. As understandable, such a probability reads
$$n_\ell(s)= {\ell \choose s} \rho^s (1-\rho)^{\ell-s}, \quad (s=0,1,\ldots,\ell).$$
Since
$$\langle s \rangle = \sum_{s=0}^\ell \frac{\ell!}{(\ell-s)!(s-1)!} \rho^s (1-\rho)^{\ell-s}= \rho \ell$$
and
$$\langle s^2 \rangle = \sum_{s=0}^\ell \frac{s~\ell!}{(\ell-s)!(s-1)!} \rho^s (1-\rho)^{\ell-s}= \rho \ell \bigl(1+\rho(\ell-1)\bigr),$$
the corresponding statistical variance is of a form
$$\sigma^2= \langle s^2 \rangle - \langle s \rangle^2= \rho \ell(1-\rho),$$
which leads (after the transition to the re-scaled headways $L=\rho\ell$) to the final formula for the spectral rigidity
\BE \NV(L)=(1-\rho)L. \label{NV-final-formula}\EE
Obviously, the slope $\chi=1-\rho$ of the rigidity $\NV(L)$ decreases linearly with increasing density of the TASEP-particles, as confirmed by the numerous numerical tests visualized in the Figure 6. This result is very similar to the behavior revealed in realistic traffic samples.

\begin{figure}[htb]
\begin{center}
\epsfig{file=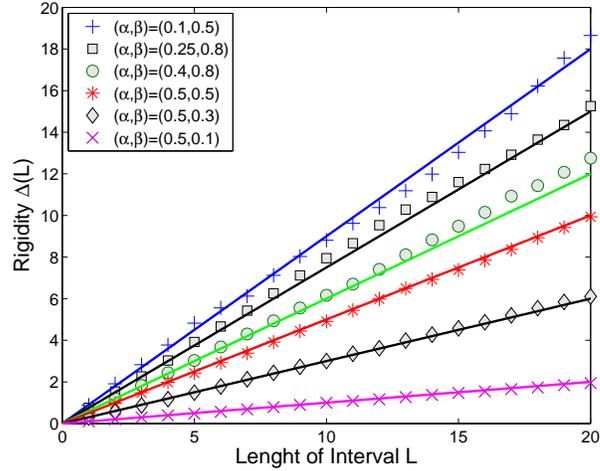,height=2.5in} \parbox{12cm}{\caption{Spectral Rigidity of TASEP-model. Signs represent the numerically-obtained spectral rigidity $\NV(L)$ for TASEP with parameters $\al,\be$ indicated in the legend. The curves display the corresponding rigidity calculated via the exactly-derived formula \eqref{NV-final-formula}.}}
\end{center}
\label{fig:TASEP_rigidity}
\end{figure}

\section{Summary and conclusions}

Using the MPA formalism we have derived the universal analytical formula for distance clearance among subsequent particles of TASEP with open boundary conditions. Such an exact formula has been substantially simplified for middle segment of sufficiently long TASEP-chains and consecutively compared to the empirically-obtained clearance distribution of vehicular streams. As uncovered, the interactions among the TASEP-particles are extremely short-ranged and hard-cored, which leads to the fact that the continuous (and re-scaled) alternative for associated clearance distribution \eqref{universal_headway_distribution} reads as
\BE \wp(r)=\Theta(r)\e^{-r}. \label{continuous_clearance} \EE
Thus, in spite of the fact that the TASEP-particles are repulsed such interactions are significantly different from those detected in realistic traffic flow. Anyway, the traffic clearance is of the form \eqref{p_beta}, which means that distribution \eqref{continuous_clearance} corresponds to the traffic clearance distribution only if the traffic density is small ($\nu \approx 0$). But on the other hand, if the spectral rigidities are compared one can detect a good agreement between the TASEP-rigidity \eqref{NV-final-formula} calculated analytically and the same quantity analyzed in traffic data (compare Figures 5 and 6).\\

To conclude, although the traffic model based on the TASEP-rules shows similar macroscopical effects as systems of moving vehicles it is definitely not suitable for modeling the microstructure of traffic (especially in congested traffic regimes).\\

\emph{Acknowledgements:} The authors would like to thank Bogolyubov Laboratory of Theoretical Physics, Joint Institute of Nuclear Research in Dubna (Russia) for incentives to the presented research. This work was supported by the Ministry of Education, Youth
and Sports of the Czech Republic within the project MSM 6840770039.

\section{Appendix}

\begin{lemma}
\label{lemma:com}
    Let $m,n,a$ be the arbitrary natural numbers. Let $ a \leq m.$ Then
        \BE
        \label{eq:lemma,com}
            \sum_{i=a}^m {n-i+m-1\choose n-1}={n+m-a\choose n}.
        \EE
\end{lemma}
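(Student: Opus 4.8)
The plan is to reduce the stated identity to the classical hockey-stick summation by a single change of summation index, and then to establish that standard identity by a short induction. First I would substitute $j=m-i$, so that as $i$ runs from $a$ up to $m$ the new index $j$ runs from $0$ up to $m-a$. Under this substitution the summand transforms as
\BE
\binom{n-i+m-1}{n-1}=\binom{n+j-1}{n-1},
\EE
so the left-hand side of~\eqref{eq:lemma,com} becomes $\sum_{j=0}^{m-a}\binom{n+j-1}{n-1}$, which is precisely the partial sum of binomial coefficients taken along a fixed lower entry $n-1$.

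The core step is then to prove the hockey-stick identity $\sum_{j=0}^{M}\binom{n+j-1}{n-1}=\binom{n+M}{n}$ for every $M\geq0$, after which setting $M=m-a$ delivers the claimed right-hand side $\binom{n+m-a}{n}$. I would carry this out by induction on $M$. The base case $M=0$ is immediate, since both sides equal $\binom{n-1}{n-1}=\binom{n}{n}=1$. For the inductive step I would split off the top term and apply Pascal's rule in the form $\binom{n+M-1}{n}+\binom{n+M-1}{n-1}=\binom{n+M}{n}$, which closes the argument.

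Alternatively, and this avoids invoking a named identity at all, one may induct directly on $a$ in the decreasing direction. The base case $a=m$ gives $\binom{n-1}{n-1}=1=\binom{n}{n}$ on both sides. Assuming the result for $a+1$, namely $\sum_{i=a+1}^{m}\binom{n-i+m-1}{n-1}=\binom{n+m-a-1}{n}$, I would peel off the $i=a$ term $\binom{n+m-a-1}{n-1}$ and again combine it with the inductive value through Pascal's rule to obtain $\binom{n+m-a}{n}$. Both routes rest on the same single algebraic identity.

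There is no genuine analytic obstacle here; the identity is elementary and the only real work is careful index bookkeeping. The point needing attention is verifying that the shift $j=m-i$ maps the range $a\leq i\leq m$ exactly onto $0\leq j\leq m-a$, and that every binomial argument stays nonnegative throughout, which the hypothesis $a\leq m$ guarantees (so that each $\binom{n+j-1}{n-1}$ is well defined). Getting the direction of the induction right, decreasing in $a$ or equivalently increasing in the upper limit $M=m-a$, is what makes the application of Pascal's rule line up cleanly, and that is the one place where a sign or index slip could derail the computation.
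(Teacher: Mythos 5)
Your proposal is correct and takes essentially the same route as the paper: both arguments peel off a single binomial term and close the induction with Pascal's rule (the paper inducts on $m$ for all $a\leq m$ at once, you induct on $M=m-a$, which amounts to the same thing since the sum depends only on $n$ and $m-a$). The only cosmetic difference is that your re-indexing to the hockey-stick form makes the bookkeeping slightly cleaner and avoids the paper's separate treatment of the boundary case $a=m+1$.
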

\begin{proof}
    We will proceed using the mathematical induction. Let $m=a=1.$ Then for an arbitrary $n\in\mathbb{N}$
$$\sum_{i=1}^1{n-i \choose n-1}={n-1\choose n-1}={n \choose n}.$$
Now let  the relation~\eqref{eq:lemma,com} hold for all $a\leq m.$ Then for $a< m+1$
$$\sum_{i=a}^{m+1}{n-i+m\choose n-1}=\sum_{i=a}^{m}{n-i+m-1\choose n-1}+{n-a+m\choose n-1}={n+m-a\choose n}+{n-a+m\choose n-1}={n-a+m+1\choose n}.$$
The case $a=m+1$ can be proven analogically to the case $m=1,a=1.$
\end{proof}

\begin{lemma} If using the notation
\label{lemma:bnp}
    \BE
    \label{eq:notation}
        \B_{n,p}=
        \begin{cases}
            \frac{p(2n-p-1)!}{n!(n-p)!}& 0<p\leq n,\\
            0&\text{otherwise,}
        \end{cases}
    \EE
it can simply be verified that for $n>0$ and $p\geq0$
$$\B_{n,p}=\B_{n+1,p+1}-\B_{n+1,p+2}.$$
\end{lemma}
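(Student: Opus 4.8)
The plan is to verify the identity by direct substitution of the piecewise definition~\eqref{eq:notation}, handling first the generic regime in which all three symbols fall in the upper (nonzero) branch and then disposing of the boundary values separately. No induction or auxiliary result is needed; the content is a single factorial cancellation together with careful bookkeeping of which terms vanish.

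First I would assume $0\leq p\leq n-1$, so that $\B_{n,p}$, $\B_{n+1,p+1}$ and $\B_{n+1,p+2}$ are all given by the upper branch of~\eqref{eq:notation}. Substituting $N=n+1$ with second index $p+1$ and then $p+2$ yields
$$\B_{n+1,p+1}=\frac{(p+1)(2n-p)!}{(n+1)!\,(n-p)!}\,,\qquad \B_{n+1,p+2}=\frac{(p+2)(2n-p-1)!}{(n+1)!\,(n-p-1)!}\,.$$
I would then pull the common quantity $\frac{(2n-p-1)!}{(n+1)!\,(n-p-1)!}$ out of the difference, using $(2n-p)!=(2n-p)(2n-p-1)!$ and $(n-p)!=(n-p)(n-p-1)!$. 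This reduces the whole problem to simplifying the scalar bracket $\frac{(p+1)(2n-p)}{n-p}-(p+2)$.

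The only genuine computation is that bracket: its numerator $(p+1)(2n-p)-(p+2)(n-p)$ collapses to $p(n+1)$, so the bracket equals $\frac{p(n+1)}{n-p}$. Multiplying the common factor back in and cancelling $\frac{n+1}{(n+1)!}=\frac{1}{n!}$ together with $(n-p)(n-p-1)!=(n-p)!$ returns exactly $\frac{p(2n-p-1)!}{n!\,(n-p)!}=\B_{n,p}$, which proves the identity on the whole range $0\leq p\leq n-1$ (note that at $p=0$ the numerator $p(n+1)$ vanishes, consistently with $\B_{n,0}=0$).

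Finally I would treat the two remaining boundary values. For $p>n$ all three symbols vanish through the lower branch of~\eqref{eq:notation}, so the identity is the triviality $0=0-0$. The one case demanding separate attention is $p=n$, where the factored form is illegitimate since $(n-p-1)!$ would be $(-1)!$, and moreover $\B_{n+1,n+2}=0$ because $n+2>n+1$. A direct evaluation settles it: $\B_{n,n}=\frac{n(n-1)!}{n!\,0!}=1$ and $\B_{n+1,n+1}=\frac{(n+1)\,n!}{(n+1)!\,0!}=1$, so the claim again holds as $1=1-0$. The main point requiring care throughout is precisely this accounting of which terms drop out of the piecewise definition, and ensuring that no negative factorial is ever invoked in the generic manipulation; the algebra itself is entirely routine.
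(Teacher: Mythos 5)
Your proof is correct and complete: the factorial reduction in the generic range $0\leq p\leq n-1$ (with the numerator collapsing to $p(n+1)$), together with the separate checks at $p=n$ and $p>n$, covers every case of the piecewise definition without ever invoking a negative factorial. The paper offers no proof beyond asserting the identity ``can simply be verified,'' and your direct substitution is exactly that verification, so there is nothing to compare against.
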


\begin{lemma} For arbitrary square matrices $D,E$ fulfilling $DE=D+E$ and for $n\geq0$
\label{lemma:dnde}
$$D^{n}(D+E)=D^{n+1}+D^{n-1}(D+E)=\sum_{q=1}^{n+1}D^q+E.$$
\end{lemma}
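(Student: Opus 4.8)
The plan is to observe that the single defining relation $DE=D+E$ already lets one strip one factor of $E$ off the right end of any power of $D$, and then to iterate this stripping to reach the telescoping closed form. The whole lemma therefore reduces to a one-line algebraic manipulation followed by a routine induction; no representation of $D,E$ and no use of the boundary vectors $\braw,\ketv$ will be needed.

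First I would establish the left-hand (recursive) equality. Writing $D^n(D+E)=D^{n+1}+D^nE$ and then rewriting the mixed term as $D^nE=D^{n-1}(DE)=D^{n-1}(D+E)$ — where the middle step is exactly the hypothesis $DE=D+E$ — yields immediately
$$D^n(D+E)=D^{n+1}+D^{n-1}(D+E),$$
valid for $n\geq1$ with the convention $D^0=I$.

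Next I would prove the closed form $D^n(D+E)=\sum_{q=1}^{n+1}D^q+E$ by induction on $n$. For the base case $n=0$ one simply reads off $D^0(D+E)=D+E=\sum_{q=1}^{1}D^q+E$, so the recursion (which is only meaningful for $n\geq1$) is not invoked here. For the inductive step, assuming the formula holds for $n$, I apply the recursion just derived in the form $D^{n+1}(D+E)=D^{n+2}+D^n(D+E)$ and substitute the induction hypothesis $D^n(D+E)=\sum_{q=1}^{n+1}D^q+E$; this gives $D^{n+1}(D+E)=D^{n+2}+\sum_{q=1}^{n+1}D^q+E=\sum_{q=1}^{n+2}D^q+E$, which closes the induction.

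There is essentially no hard part: the only point demanding a little care is the boundary index $n=0$, where the intermediate expression $D^{n-1}(D+E)$ is undefined, so the closed form must be verified directly rather than through the recursion. Everything else is forced by the single rule $DE=D+E$, and the sum $\sum_{q=1}^{n+1}D^q+E$ arises precisely as the telescoped record of repeatedly splitting off one power of $D$ at each application of the recursion.
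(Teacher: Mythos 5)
Your proof is correct, and it follows exactly the route implicit in the lemma's own statement: the paper leaves this lemma unproved, with the displayed middle term $D^{n+1}+D^{n-1}(D+E)$ serving precisely as the recursion you derive from $DE=D+E$ and then telescope by induction. Your explicit handling of the $n=0$ boundary case, where $D^{n-1}(D+E)$ is undefined and the closed form must be checked directly, is a welcome point of care that the paper's bare statement glosses over.
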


\begin{lemma}
\label{lemma:d+en}
    Assume that $D,~E$ to be arbitrary square matrices fulfilling the relation $DE=D+E.$ Let $n$ be the natural number. Then
    \BE
        (D+E)^n =\sum_{m=1}^n \B_{n,m}\sum_{q=0}^m E^qD^{m-q}.\label{Dokazal_Derrida}
    \EE
\end{lemma}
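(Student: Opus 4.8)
The plan is to induct on $n$, exploiting the relation $DE=D+E$ to move every $D$ to the right of every $E$; this rewrites any word in $D,E$ as a combination of the normal-ordered monomials $E^qD^{m-q}$ appearing on the right-hand side of~\eqref{Dokazal_Derrida}. For the base case $n=1$ I would simply note that $(D+E)^1=D+E$ equals $\B_{1,1}(E^0D^1+E^1D^0)$, since $\B_{1,1}=1$ by~\eqref{eq:notation}.

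For the inductive step I would multiply the assumed identity on the right by $(D+E)$, giving
$$(D+E)^{n+1}=\sum_{m=1}^n\B_{n,m}\sum_{q=0}^m E^qD^{m-q}(D+E).$$
The leading $E^q$ is inert, and Lemma~\ref{lemma:dnde} rewrites each block $D^{m-q}(D+E)$ as $\sum_{j=1}^{m-q+1}D^j+E$. Hence every term becomes $\sum_{j=1}^{m-q+1}E^qD^j+E^{q+1}$, so the whole expression is already a combination of normal-ordered monomials $E^aD^b$ and no further reduction is needed.

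The core of the argument is then to collect the coefficient of a fixed monomial $E^aD^b$ and verify it equals $\B_{n+1,a+b}$. Setting $s=a+b$, for $b\geq1$ this monomial comes from the $E^qD^j$ pieces with $q=a$, $j=b$, summed over $m\geq s-1$, while for $b=0$ it comes from the $E^{q+1}$ pieces with $q=a-1$, summed over $m\geq a-1$; in every case the coefficient is a tail sum $\sum_{m=c}^n\B_{n,m}$ with $c=\max(s-1,1)$. Invoking the recurrence $\B_{n,m}=\B_{n+1,m+1}-\B_{n+1,m+2}$ of Lemma~\ref{lemma:bnp}, this telescopes to $\B_{n+1,c+1}$, since $\B_{n+1,n+2}=0$.

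The place I expect trouble---and the main obstacle---is the bookkeeping at the boundary. For $s\geq2$ one has $c=s-1$ and the coefficient comes out as $\B_{n+1,s}$, exactly as wanted. But for the two degree-one monomials $D$ and $E$ (where $s=1$) the tail starts at $c=1$, so the telescoping yields $\B_{n+1,2}$ rather than the nominal $\B_{n+1,1}$. I would reconcile this with the identity $\B_{n+1,1}=\B_{n+1,2}$, which follows at once from the explicit formula~\eqref{eq:notation} (and can also be read off from Lemma~\ref{lemma:com}), and which is forced anyway by the $D\leftrightarrow E$ symmetry of $DE=D+E$. Once this edge case is settled, the coefficient of every $E^aD^b$ equals $\B_{n+1,a+b}$, and the induction closes.
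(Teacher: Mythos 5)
Your proposal is correct and follows essentially the same route as the paper's proof: induction on $n$, right-multiplication by $(D+E)$, Lemma~\ref{lemma:dnde} to restore normal ordering, and the recurrence $\B_{n,m}=\B_{n+1,m+1}-\B_{n+1,m+2}$ of Lemma~\ref{lemma:bnp} to telescope the resulting sums. Your explicit coefficient-collection and the edge-case identity $\B_{n+1,1}=\B_{n+1,2}$ are exactly what the paper's more compressed manipulation uses implicitly when it absorbs the low-order terms into a single $(D+E)^2$ block, so the two arguments differ only in bookkeeping style.
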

\begin{proof} Again, we will use the mathematical induction with respect to $n$. Let us assume that~\eqref{Dokazal_Derrida} holds. Then with help of  lemma~\ref{lemma:bnp} and lemma~\ref{lemma:dnde} we elementarily deduce that
    \begin{multline*}
        (D+E)^{n+1}=\sum_{m=1}^n \B_{n,m}\sum_{q=0}^m E^qD^{m-q}(D+E)       =\sum_{m=1}^{n}\B_{n+1,m+1}\sum_{q=0}^{m}\left(E^{q+1}+E^{q}\sum_{p=1}^{m-q+1}D^q\right)-\\        -\sum_{m=1}^{n}\B_{n+1,m+2}\sum_{q=0}^{m}\left(E^{q+1}+E^{q}\sum_{p=1}^{m-q+1}D^q\right)        =\B_{n+2,2}(D+E)^2+\sum_{m=3}^{n+1}\B_{n+1,m}\sum_{q=0}^m E^qD^{m-q}=\\
        =\sum_{m=1}^{n+1} \B_{n+1,m}\sum_{q=0}^m E^qD^{m-q}.
    \end{multline*}
\end{proof}

\begin{lemma}
\label{lemma:den}
    Assume that $D,~E$ to be arbitrary square matrices fulfilling the relation $DE=D+E.$ Let $n\in\mathbb{N}$. Then
    \BE
    \label{eq:den}
        DE^n=DE^{n-1}+E^n=D+\sum_{w=1}^{n}E^w.
    \EE
\end{lemma}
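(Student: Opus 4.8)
The plan is to establish the two equalities in turn, both following almost immediately from the defining relation $DE = D + E$. For the first equality I would apply the defining relation to the leftmost pair of factors in $DE^n$. Writing $DE^n = (DE)E^{n-1}$ and substituting $DE = D+E$ gives $(D+E)E^{n-1} = DE^{n-1} + E\cdot E^{n-1} = DE^{n-1} + E^n$, which is exactly the claimed recursion. This step is purely algebraic: it uses associativity of matrix multiplication together with the single hypothesis on $D$ and $E$, and nothing about their commutativity.

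For the second equality I would iterate the recursion just derived. Starting from $DE^n = DE^{n-1} + E^n$ and repeatedly replacing each term $DE^{k}$ by $DE^{k-1} + E^{k}$, the expression telescopes down to the base term $DE^0 = D$, producing $DE^n = D + \sum_{w=1}^n E^w$. Equivalently, this can be phrased as a short induction on $n$: the case $n=1$ is precisely the hypothesis $DE = D+E$, and assuming $DE^{n-1} = D + \sum_{w=1}^{n-1}E^w$, the first equality yields $DE^n = DE^{n-1} + E^n = D + \sum_{w=1}^{n-1}E^w + E^n = D + \sum_{w=1}^n E^w$.

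There is no genuine obstacle in this argument; the only point requiring attention is that the defining relation must be applied to the first two factors $D$ and $E$ (the leftmost occurrence of $E$), rather than attempting to commute or rearrange factors elsewhere in the product, since $D$ and $E$ need not commute in general. Once this is recognized, the proof reduces to the one-line recursion and its telescoping, both of which are immediate.
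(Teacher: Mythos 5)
Your proof is correct: applying $DE=D+E$ to the leftmost pair in $DE^n=(DE)E^{n-1}$ gives the recursion, and telescoping (or a one-line induction) gives the closed sum. The paper states this lemma without any proof, treating it as immediate, and your argument is precisely the intended elementary one, including the correct observation that the relation must be applied to the leftmost factor pair since $D$ and $E$ need not commute.
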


\begin{lemma}
\label{lemma:dnem}
    Assume that $D,~E$ to be arbitrary square matrices fulfilling the relation $DE=D+E.$ Let $m,n$ be the natural numbers. Then
    \BE
    \label{eq:dnem}
        D^n E^m =\sum_{i=1}^m {n-i+m-1\choose n-1}E^i+\sum_{j=1}^n {n-j+m-1\choose m-1}D^j.
    \EE
\end{lemma}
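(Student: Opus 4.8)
The plan is to prove~\eqref{eq:dnem} by induction on $n$ with $m$ held fixed, using Lemma~\ref{lemma:den} to push a single factor $D$ past a power of $E$ and Lemma~\ref{lemma:com} to resum the binomial coefficients that result. Throughout, I will track the coefficient of $E^w$ and the coefficient of $D^j$ separately and check that each matches the prediction of the formula for $n+1$.

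For the base case $n=1$ I would simply invoke Lemma~\ref{lemma:den}, which gives $DE^m = D + \sum_{w=1}^m E^w$. Since $\binom{m-i}{n-1}=\binom{m-i}{0}=1$ for every $1\le i\le m$ and $\binom{m-1}{m-1}=1$, the right-hand side of~\eqref{eq:dnem} collapses to exactly $\sum_{i=1}^m E^i + D$, so the two sides agree.

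For the inductive step I would assume~\eqref{eq:dnem} for a given $n$ and multiply on the left by $D$, writing $D^{n+1}E^m = D\,(D^n E^m)$. The $D^j$-terms become $D^{j+1}$, while each $E^i$-term produces $DE^i$, which I replace by $D+\sum_{w=1}^i E^w$ via Lemma~\ref{lemma:den}. The key manipulation is to swap the order of the resulting double sum, $\sum_{i=1}^m \binom{n-i+m-1}{n-1}\sum_{w=1}^i E^w = \sum_{w=1}^m E^w \sum_{i=w}^m \binom{n-i+m-1}{n-1}$, and apply Lemma~\ref{lemma:com} with $a=w$ to obtain the coefficient $\binom{n+m-w}{n}$ of $E^w$, which is precisely the target value $\binom{(n+1)-w+m-1}{(n+1)-1}$. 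Collecting the leftover $D$'s, the constant multiple of $D$ coming from Lemma~\ref{lemma:den} equals $\sum_{i=1}^m \binom{n-i+m-1}{n-1}=\binom{n+m-1}{n}$ (Lemma~\ref{lemma:com} with $a=1$), and combining it with the shifted contributions $d_{j-1}^{(n,m)}D^{j}$ for $j\ge 2$ reproduces the coefficients $\binom{(n+1)-j+m-1}{m-1}$ for $1\le j\le n+1$.

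The one spot where the computation is not purely mechanical, and hence the main obstacle, is matching the coefficient of $D^1$: the induction delivers it as $\binom{n+m-1}{n}$, whereas the formula for $n+1$ predicts $\binom{n+m-1}{m-1}$. These coincide by the symmetry $\binom{n+m-1}{n}=\binom{n+m-1}{(n+m-1)-n}=\binom{n+m-1}{m-1}$, and recognizing that this elementary identity—rather than a further appeal to Lemma~\ref{lemma:com}—is exactly what closes the $D$-term bookkeeping is the crucial observation. All the remaining index shifts for $w\ge 1$ and $j\ge 2$ are routine.
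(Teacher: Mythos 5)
Your proposal is correct and follows essentially the same route as the paper's proof: induction on $n$ with base case Lemma~\ref{lemma:den}, left-multiplication by $D$, expansion of $DE^i$ via Lemma~\ref{lemma:den}, interchange of the double sum, and resummation via Lemma~\ref{lemma:com}. The only difference is presentational: you make explicit the symmetry $\binom{n+m-1}{n}=\binom{n+m-1}{m-1}$ needed to match the coefficient of $D^1$, a step the paper absorbs silently into its final display.
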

\begin{proof}
    We will use the mathematical induction. For $n=1$ and an arbitrary $m\in\mathbb{N}$ the expression~\eqref{eq:dnem} reduces to the expression~\eqref{eq:den}. Let us now assume that~\eqref{eq:dnem} holds true for some $n\in\mathbb{N}.$ By means of lemma~\ref{lemma:com} and lemma~\ref{lemma:den} we get
    \begin{multline*}
        D(D^nE^m)=\sum_{i=1}^m {n-i+m-1\choose n-1}\left[D+\sum_{w=1}^{i}E^w\right]+\sum_{j=1}^n {n-j+m-1\choose m-1}D^{j+1}=\\
        =\sum_{j=1}^{n+1} {n-j+m\choose m-1}D^j+\sum_{w=1}^{m}E^w\sum_{i=w}^m{n-i+m-1\choose n-1}=\sum_{i=1}^m {n-i+m\choose n}E^i+\sum_{j=1}^{n+1} {n-j+m\choose m-1}D^j.
    \end{multline*}
\end{proof}

\begin{figure}[htb]
\begin{center}
\epsfig{file=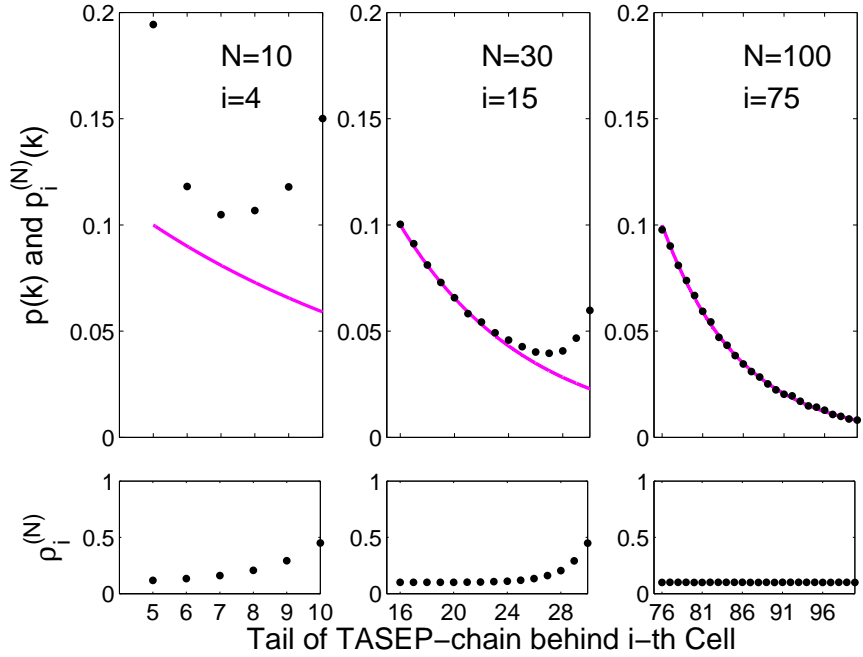,height=3.5in} \parbox{12cm}{\caption{Clearance Distribution in the Phase of Low Density I. We plot the inter-particle distance distribution calculated for TASEP with $\al=0.1$ and $\be=0.2.$ The bullets visualize the clearance distribution $\wp_i^{(N)}(k)$ enumerated via the exact formula \eqref{finalka-clearance}, while continuous curves visualize the large $N$ approximations $\wp(k)$ summarized by the relation \eqref{universal_headway_distribution}. Above that, we also plot the relevant densities, which synoptically illustrates the reasons for discrepancies between \eqref{finalka-clearance} and \eqref{universal_headway_distribution}.}}
\end{center}
\end{figure}

\begin{figure}[htb]
\begin{center}
\epsfig{file=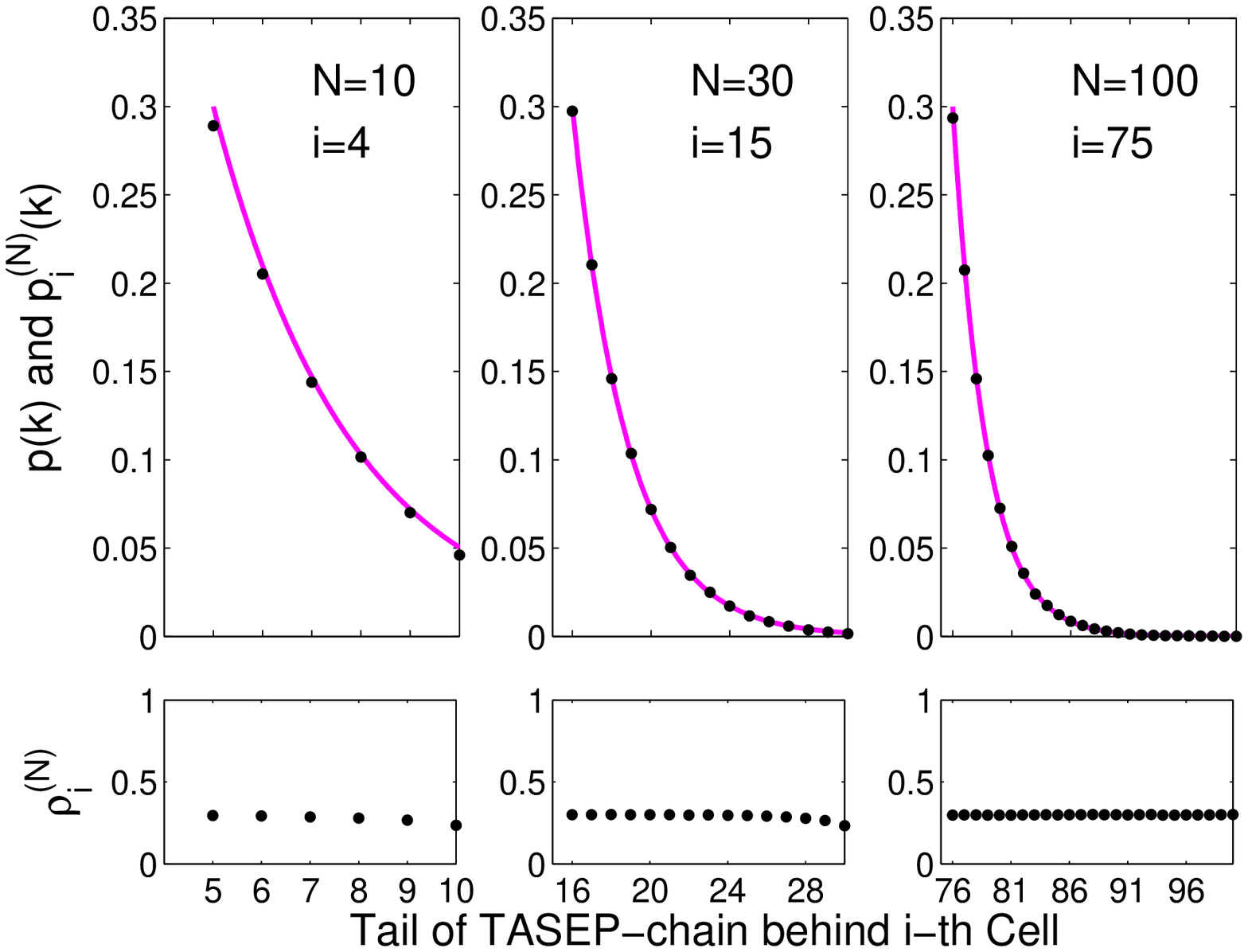,height=3.5in} \parbox{12cm}{\caption{Clearance Distribution in the Phase of Low Density II. We plot the inter-particle distance distribution calculated for TASEP with $\al=0.3$ and $\be=0.9.$ }}
\end{center}
\end{figure}

\begin{figure}[htb]
\begin{center}
\epsfig{file=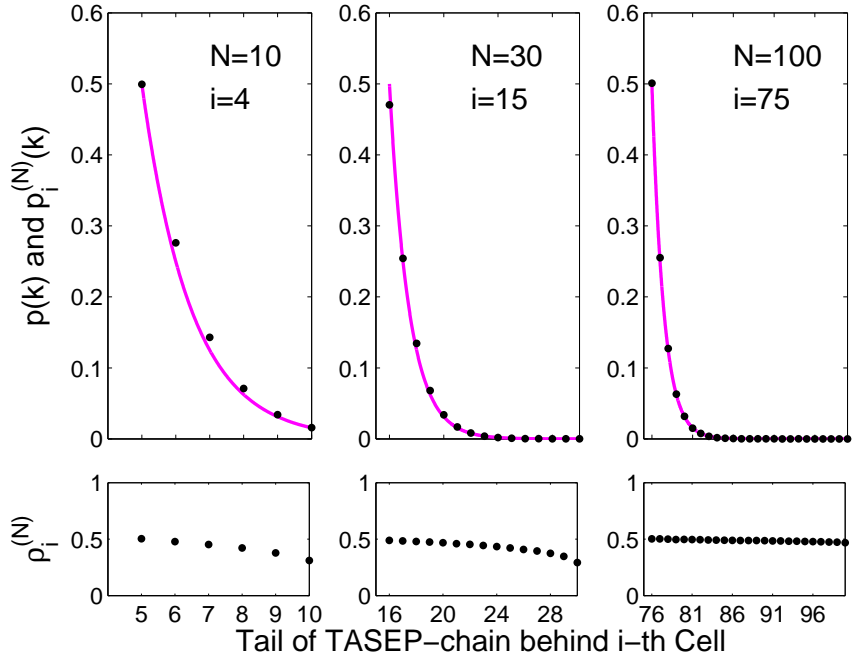,height=3.5in} \parbox{12cm}{\caption{Clearance Distribution in the Phase of Maximum Current. We plot the inter-particle distance distribution calculated for TASEP with $\al=0.8$ and $\be=0.9.$}}
\end{center}
\end{figure}

\begin{figure}[htb]
\begin{center}
\epsfig{file=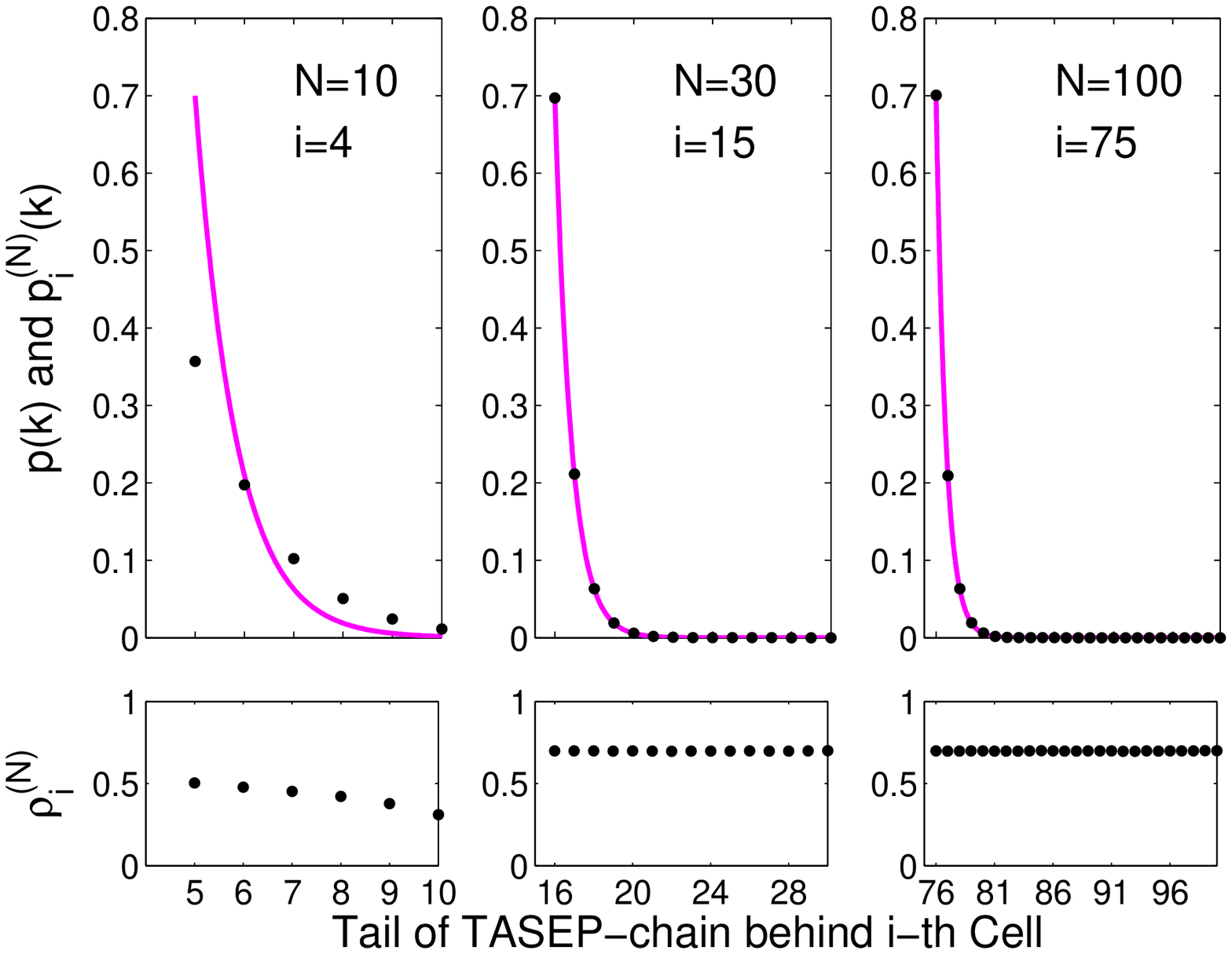,height=3.5in} \parbox{12cm}{\caption{Clearance Distribution in the Phase of High Density I. We plot the inter-particle distance distribution calculated for TASEP with $\al=0.9$ and $\be=0.3.$ }}
\end{center}
\end{figure}

\begin{figure}[htb]
\begin{center}
\epsfig{file=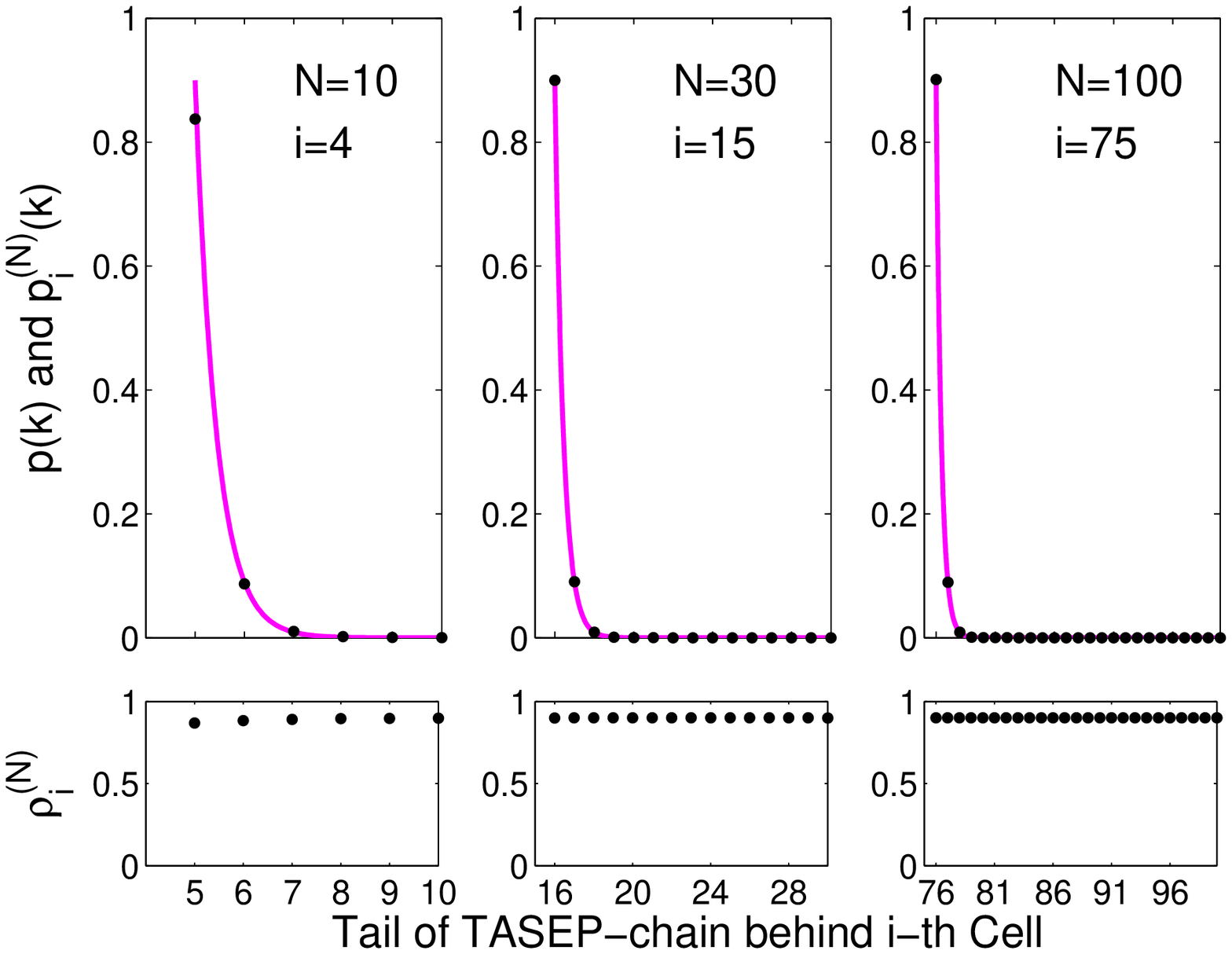,height=3.5in} \parbox{12cm}{\caption{Clearance Distribution in the Phase of High Density II. We plot the inter-particle distance distribution calculated for TASEP with $\al=0.2$ and $\be=0.1.$ }}
\end{center}
\end{figure}


\begin{thebibliography}{100}

\bibitem{Schutz}
G.M.~Sch\"{u}tz and L.H.~Tang, Int. J. Mod. Phys. B \textbf{11} (1997) 197

\bibitem{Shaw}
L.B.~Shaw, R.K.P.~Zia, and K.L.~Lee, Phys. Rev. E \textbf{68} (2003) 021910

\bibitem{Krug}
J.~Krug and L.H.~Tang, Phys. Rev. E \textbf{50} (1994) 104

\bibitem{Janowsky}
S.A.~Janowsky and J.L.~Lebowitz, Phys. Rev. A \textbf{45} (1992) 618

\bibitem{Schutz2}
G.M.~Sch\"{u}tz, J. Stat. Phys. \textbf{71} (1993) 471

\bibitem{Derrida_Janowsky}
B.~Derrida, S.A.~Janowsky, J.L.~Lebowitz, and E.R.~Speer, Europhys. Lett. \textbf{22} (1993) 651

\bibitem{Mallick}
K.~Mallick, J. Phys. A \textbf{29} (1996) 5375

\bibitem{Barkema}
G.T.~Barkema, J.F.~Marko, and B.~Widom, Phys. Rev. E \textbf{49} (1994) 5303

\bibitem{Waterman}
M.S.~Waterman, \emph{Introduction to Computational Biology: Maps, Sequences and Genomes (Interdisciplinary Statistics),} Chapman \& Hall -- London (1995)

\bibitem{Schreckenberg}
M.~Schreckenberg, A.~Schadschneider, K.~Nagel, and N.~Ito, Phys.
Rev. E \textbf{51} (1995) 2939

\bibitem{Ha}
M.~Ha, J.~Timonen, and M.~den Nijs, Phys. Rev. E \textbf{68} (2003) 056122

\bibitem{Headways_particle_models}
D.~Chowdhury, A.~Pasupathy, and S.~Sinha, Eur. Phys. J. B \textbf{5} (1998) 781

\bibitem{Fouladvand}
M.E.~Fouladvand and M.~Neek-Amali, Euro. Phys. Letter, \textbf{80/6} (2007) 60002

\bibitem{Antal}
T.~Antal and G.M.Sch\"utz, Phys. Rev. E \textbf{62} (2000) 83

\bibitem{Review-Helbing}
D.~Helbing, Rev. Mod. Phys. \textbf{73} (2001) 1067

\bibitem{Review-Chowdhury}
D.~Chowdhury, L.~Santen, and A.~Schadschneider, Physics Reports \textbf{329} (2000) 199

\bibitem{Rajewsky}
N.~Rajewsky, L.~Santen, A.~Schadschneider, and M.~Schreckenberg, J. Stat. Phys. \textbf{92} (1998) 151

\bibitem{Red_cars}
M.~Krb\'alek, J. Phys. A: Math. Theor. \textbf{41} (2008) 205004

\bibitem{Magd}
A.Y.~Abul-Magd, Phys. Rev. E \textbf{76} (2007) 057101

\bibitem{Cecile}
C.~Appert-Rolland, Phys. Rev. E \textbf{80} (2009) 036102

\bibitem{Treiber_PRE}
M.~Treiber, A.~Kesting, and D.~Helbing, Phys. Rev. E \textbf{74} (2006) 016123

\bibitem{Krbalek_gas}
M.~Krbálek, J. Phys. A: Math. Theor. \textbf{40} (2007) 5813

\bibitem{Wagner}
M.~Krbálek, P.~\v Seba, and P.~Wagner, Phys. Rev. E \textbf{64} (2001) 066119

\bibitem{EPJ_Helbing}
M.~Treiber and D.~Helbing, Eur. Phys. J. B \textbf{68} (2009) 607

\bibitem{Traffic_NV}
M.~Krb\'alek and P.~\v Seba. J. Phys. A: Math. Theor. \textbf{42} (2009) 345001

\bibitem{Derrida_1}
B.~Derrida, Physics Reports \textbf{301} (1998) 65

\bibitem{Derrida_2}
B.~Derrida, M.R.~Evans, W.~Hakim, and V.~Pasquier, J. Phys. A: Math. Theor. \textbf{26} (1993) 1493

\bibitem{Derrida_3}
B.~Derrida, M.R.~Evans, \emph{Non-equilibrium Statistical Mechanics in One Dimension,} Cambridge
University Press, Cambridge, (1997) 277

\bibitem{Blythe}
R.~A.~Blythe and M.~R.~Evans, J. Phys. A: Math. Theor. \textbf{40} (2007) R333

\bibitem{Mehta}
M.L.~Mehta, \emph{Random matrices (revised and enlarged),} New York: Academic Press, 1991


\end{thebibliography}
\end{document}